    \newcolumntype{L}{>{\raggedright\arraybackslash}X}
\begin{document}
\raggedbottom

\author[Pai]{Mallesh M Pai}
\author[Resnick]{Max Resnick\\ Special Mechanisms Group, Consensys Inc}

\title{Centralization in Attester-Proposer Separation}

\date{\today}

\begin{abstract}
    We show that Execution Tickets and Execution Auctions dramatically increase centralization in the market for block proposals, even without multi-block MEV concerns. Previous analyses have insufficiently or incorrectly modeled the interaction between ahead-of-time auctions and just-in-time (JIT) auctions. We study a model where bidders compete in an execution auction ahead of time, and then the winner holds a JIT auction to resell the proposal rights when the slot arrives. During the execution auction, bidders only know the distribution of their valuations. Bidders then draw values from their distributions and compete in the JIT auction. We show that a bidder who wins the execution auction is substantially advantaged in the JIT auction since they can set a reserve price higher than their own realized value for the slot to increase their revenue. As a result, there is a strong centralizing force in the execution auction, which allows the ex-ante strongest bidder to win the execution auction every time, and similarly gives them the strongest incentive to buy up all the tickets. Similar results trivially apply if the resale market is imperfect, since that only reinforces the advantages of the ex-ante strong buyer. To reiterate, these results do not require the bidders to employ multi-block MEV strategies, although if they did, it would likely amplify the centralizing effects. 
\end{abstract}

    \maketitle

\newpage 
\section{Introduction}
Attester-Proposer Separation suggests splitting up attester duties (participating in consensus) from proposer duties (proposing the next block). Currently, validators perform both duties: all validators participate in consensus, while the validator responsible for proposing the next block is determined via in-protocol randomization. However, constructing a valuable block from various sources (public mempool, private orderflow, trading bots, etc) is a specialized activity, and therefore, currently, the right to propose a block is resold by the proposer in an out-of-protocol, real-time auction known as MEV-Boost. The main reason to separate these duties is to then capture the revenues generated by selling the right to propose a block in advance and in protocol.

Two instantiations have been proposed: Execution Tickets (ET),\footnote{\url{https://ethresear.ch/t/execution-tickets/17944}} and more recently, Execution Auctions (EA).\footnote{\url{https://ethresear.ch/t/execution-auctions-as-an-alternative-to-execution-tickets/19894}}  Both are fairly easy to describe. In the EA design, the right to be the proposer in slot n + d  is instead sold in an earlier slot n. In the ET design, lottery tickets are sold earlier by some mechanism and the winning lottery ticket for slot n+d is drawn using on-chain randomness.

Concerns about both proposals have already circulated, mostly centering around the fact that the block proposer being known in advance will enable multi-block MEV strategies.\footnote{See, e.g,. \url{https://x.com/_charlienoyes/status/1806186662327689441}.} In this short paper, we present a more basic concern: we are already in a setting where there are only a few competitive builders: 3 builders currently produce over 90\% of blocks. APS, implemented as either EA or ET, will lead to even more centralization at the builder level, further entrenching the best builder(s). In what follows, we formally state and prove the following result:
\begin{theorem*}[Informal]
    The ex-ante strongest builder is willing to bid higher than anyone else in an Execution Auction, and therefore wins every Execution Auction. Analogously, this builder places a higher valuation than anyone else for Execution Tickets and will buy up all of them. Further, this builder, in equilibrium, produces a larger fraction of blocks than they would under MEV-Boost, i.e., APS magnifies their inherent advantage over other builders. 
\end{theorem*}

We show this in the context of a game-theoretic model. The game proceeds in two stages: In the first stage, there is the ahead-of-time execution auction for the slot. Since this is an ahead-of-time auction, bidders do not know what their realized values for the slot will be at this time (e.g., they do not know what transactions will actually be in the mempool at the time of the slot, what arbitrage opportunities will exist, etc.). Instead, they only know the distributions of their values. Critically, some bidders may be stronger than others, in the sense that the distribution from which they may realize their value is superior to others.%
\footnote{We formalize this and show empirical evidence of this assumption below.} 

In the second stage, bidders learn their value for the slot (drawn from their distribution). The winner of the execution auction in the first stage may then hold a resale auction (of his choosing) to resell the right to propose a block in that slot.\footnote{The original proposals themselves identify this:
\begin{quote}
   \emph{A secondary market will most likely develop where an EA ticket winner can resell their proposer right before their turn to propose. Even if the protocol does not allow them to transfer that right, this can be easily done via an out-of-protocol gadget.}
\end{quote}}
Since this is a thin, oligopolistic market, we assume that the winner optimally exercises market power in their resale, formally, reselling to maximize their net profit/ revenue. This is achieved by setting a higher reservation price to resell the right than their own realized value for the block.\footnote{As we argue later, a similar result holds more generally if the resale market is imperfect.}

We solve for the subgame-perfect Nash equilibrium of this game, that is to say, bidders bid in the execution auction taking into account that a) if they own the rights to propose and have the option to either (optimally) resell this right, or to propose their own block, and, b) if they do not own the rights then they can only bid in the resale auction run by the original owner of the rights, and receive surplus, and prove the result shown above. 

It is useful, at this stage, to briefly discuss our assumptions:
\begin{enumerate}
\item \emph{In advance of the actual slot, block builders draw values from different distributions}: To see why this is reasonable, note that currently there are 2 builders who win almost all of the real-time MEV-Boost auctions. Further, the top builder wins roughly half of the auctions, the second wins roughly four-tenths, and a long tail of builders share the remaining tenth.\footnote{Source: \url{www.mevboost.pics}.} 

Given that MEV-Boost uses a standard English auction format where it is a dominant strategy to bid up to one's value, to a first approximation, it is reasonable to conclude that these builders are drawing values from different distributions. 

\item \emph{The winner of the EA or owner of the winning lottery ticket in ET may try to resell the right}: As we pointed out above, this is explicitly considered in the initial proposals. 

\item \emph{The winner of the EA optimally exercises market power in the resale market}: we make this assumption for analytical convenience, we show by means of a numerical example that our main result also applies if they just apply a suboptimal markup (e.g. sell the right for, say, 10\% more than the value of their own best block). 
\end{enumerate}

We discuss the implications of our findings in Section \ref{sec:implications}
\section{Model}\label{sec:model}

Formally, consider the following extensive-form game:

\paragraph{Players}
The players are 2 builders and any number of non-builders. 

\paragraph{Timing} The game proceeds over two periods: 
\begin{enumerate}
    \item In Period $1$ (Slot n), the execution auction is held, which sells the right to propose the block for Slot n+d. We will model the execution auction as a standard English (ascending) auction, or equivalently a sealed-bid second-price auction.  
    \item In the Period $2$ (Slot n+d), the winner in the previous period can either use the right and propose a block, or resell it. 
    
    If the winner in Period $1$ chooses to resell the right, the builders bid in an auction of the current owner's choosing, and the revenue accrues to that player.
\end{enumerate}

\paragraph{Information and Payoffs} 
In period $2$, players privately learn their value for actually proposing a block (this is the sum of their values from sequencing orders in the public mempool, private orderflow, arbitrage opportunities, etc.). Each builder $i$ has a private value $v_i$ which is a draw from a distribution with CDF $F_i$ and continuous density $f_i$ on support normalized to $[0,1]$. We assume that the these have non-decreasing hazard rates as is standard in mechanism design (see, e.g., \cite{myerson1981optimal}). Finally, we assume that $F_1 \succ F_2$ where $\succ$ denotes ordering in the Hazard rate ordering sense.%
\footnote{This implies but is stronger than the assumption that $F_1$ first order stochastically dominates $F_2$, see, e.g., \cite{shaked2007stochastic}.}
Any non-builder has a 0 value to actually propose a block. 

In period $1$, the realized values are unknown, but the distributions are common knowledge among the players.   

All players are risk-neutral and expected utility maximizers and have quasilinear utilities. Payoffs are straightforward: the eventual owner of the proposal right realizes their value $v_i$ from proposing the block. The winner in period $1$ gets the revenues from resale. There is no time discounting. 

\paragraph{Strategies and Solution Concept} In period $2$, the winner in period $1$ resells the object using the optimal auction \citep{myerson1981optimal}. In period $1$, we consider the standard truthful equilibrium of the execution auction where each player bids their value. Players are forward-looking, i.e. we employ the appropriate subgame-perfection concept. In what follows, we refer to an equilibrium satisfying these refinements as simply \emph{equilibrium}. 

We summarize the game in the following picture:
\begin{center}
\begin{tikzpicture}[
  edge from parent/.style={draw, -latex},
  sibling distance=16em, % Increased distance between siblings
  level distance=6em,
  every node/.style={draw, rounded corners, align=center}
  ]
  \node {Execution auction}
    child {node {Builder 1 wins}
      child {node {Nature draws \(v_1, v_2\)}
        child {node {Builder 1 optimal auction\\ depending on $v_1$ }}}
    }
    child {node {Builder 2 wins}
      child {node {Nature draws \(v_1, v_2\)}
        child {node {Builder 2 optimal auction\\ depending on $v_2$ }}}
    };

  % Draw dotted line
  \draw[dashed] (-7, -3.5) -- (7, -3.5);% Adjust coordinates as needed
\node[draw=none, fill=none]  at (-7, -2) {Period 1};
  \node[draw=none, fill=none]  at (-7, -6) {Period 2};

\end{tikzpicture}

\end{center}
\section{Results}

We are now in a position to describe our main theorem.
\begin{theorem}\label{thm:aps}
    There is a unique equilibrium of the execution auction game. In this equilibrium, Builder $1$ always wins the auction in Period $1$. 
\end{theorem}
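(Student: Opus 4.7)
The plan is to solve the extensive-form game by backward induction, reducing Theorem~\ref{thm:aps} to the comparison of two deterministic continuation values in period~$1$.

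\emph{Period 2.} If Builder $i$ wins period~$1$ and then learns $v_i$, they act as an informed monopolist facing a single effective buyer (Builder $j$, since non-builders have value zero). By Riley--Zeckhauser / Myerson, the optimum is a posted price $p_i^{*}(v_i)=\phi_j^{-1}(v_i)$, with $\phi_j(v)=v-(1-F_j(v))/f_j(v)$ strictly increasing under the monotone hazard rate assumption. Myerson's revenue identity then gives
\[
\Pi_i \;:=\; \mathbb{E}_{v_i,v_j}\bigl[\max(v_i,\phi_j(v_j))\bigr]
\]
as Builder $i$'s ex-ante continuation payoff from winning period~$1$. A non-builder who wins extracts only the two-buyer Myerson revenue $R^M=\mathbb{E}[\max(\phi_1(v_1),\phi_2(v_2))^{+}]$. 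Because $v_i\ge\phi_i(v_i)$ and $v_i\ge 0$, a pointwise comparison shows $\Pi_i\ge R^M$ for each builder, so non-builders are dominated in period~$1$. Given the paper's truthful-bidding refinement in the second-price / English auction, the theorem reduces to showing $\Pi_1>\Pi_2$.

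\emph{Main step: $\Pi_1>\Pi_2$.} Using the single-buyer optimal-rent identity $\mathbb{E}_{v_j}[(\phi_j(v_j)-v)^{+}]=\Psi_j(v):=\max_p(p-v)(1-F_j(p))$, decompose
\[
\Pi_i \;=\; \mathbb{E}[v_i] + \mathbb{E}[\Psi_j(v_i)],
\]
so that
\[
\Pi_1-\Pi_2 \;=\; \bigl(\mathbb{E}[v_1]-\mathbb{E}[v_2]\bigr)\;-\;\bigl(\mathbb{E}[\Psi_1(v_2)]-\mathbb{E}[\Psi_2(v_1)]\bigr).
\]
The first gap is strictly positive by FOSD (implied by hazard-rate dominance). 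The second gap captures the weaker bidder's resale-rent advantage: $\Psi_1\ge\Psi_2$ pointwise because Builder~$1$ is a more profitable resale target, and $\Psi_j$ is decreasing in its argument. To show the own-value gap dominates, I would use the envelope-theorem representation
\[
\Pi_i \;=\; M_j + \int_0^1 (1-F_i(t))\,F_j\bigl(\phi_j^{-1}(t)\bigr)\,dt
\]
(with $M_j$ the monopoly revenue against $F_j$) and combine the pointwise inequalities $1-F_1\ge 1-F_2$ (FOSD) and $\phi_1^{-1}(t)\ge\phi_2^{-1}(t)$ (from hazard-rate ordering) to sign the resulting integral difference positively.

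\emph{Main obstacle.} The delicate part is precisely the last comparison: the resale option provides a \emph{larger} premium to Builder~$2$ than to Builder~$1$, because Builder~$2$ resells to a strong buyer while Builder~$1$ resells only to a weak one, so $\Pi_1>\Pi_2$ is not automatic from FOSD alone. Hazard-rate dominance, strictly stronger than FOSD, is exactly the assumption needed to guarantee that the direct-use gap $\mathbb{E}[v_1]-\mathbb{E}[v_2]$ always dominates the resale-rent gap. Once $\Pi_1>\Pi_2$ strictly is established, Builder~$1$ wins in period~$1$ with probability one, and uniqueness of the equilibrium outcome is immediate from the truthful-bidding refinement.
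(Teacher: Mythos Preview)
Your reduction in Period~1 is wrong. You claim the theorem follows from $\Pi_1>\Pi_2$, where $\Pi_i$ is Builder~$i$'s continuation payoff \emph{conditional on winning} the execution auction. But a builder who \emph{loses} Period~1 does not receive zero: they bid in the winner's resale auction and earn strictly positive expected surplus $s^i_j=\mathbb{E}\bigl[(v_i-\phi_i^{-1}(v_j))^{+}\bigr]$ when Builder~$j$ holds the right. In a second-price auction the truthful bid is the payoff from winning minus the payoff from losing to the relevant rival, so the comparison that actually governs the Period~1 outcome is
\[
\Pi_1-s^1_2 \;>\; \Pi_2-s^2_1,
\]
equivalently $\Pi_1+s^2_1>\Pi_2+s^1_2$: total realized welfare must be higher when Builder~1 owns the right than when Builder~2 does. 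This is exactly what the paper establishes in Lemma~\ref{lem:1} via the functions $g_1,g_2$. Your inequality $\Pi_1>\Pi_2$ is strictly weaker and does not suffice; in the paper's own Example~1 one has $s^1_2\approx 1/15>s^2_1=1/30$, so the loser-surplus correction works \emph{against} Builder~1 and cannot be ignored.

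Separately, even your sketch for $\Pi_1>\Pi_2$ is not closed. In the representation $\Pi_i=M_j+\int_0^1(1-F_i(t))\,F_j(\phi_j^{-1}(t))\,dt$, the constant piece satisfies $M_2<M_1$ (the monopoly revenue extracted from the weak buyer is smaller), which counts against you. And the integrand comparison you propose does not go through from the stated inequalities alone: $1-F_1\ge 1-F_2$ helps, but $F_2\ge F_1$ pointwise while $\phi_2^{-1}(t)\le\phi_1^{-1}(t)$, so $F_2(\phi_2^{-1}(t))$ versus $F_1(\phi_1^{-1}(t))$ is ambiguous, and you have given no argument that the integral gap dominates $M_1-M_2$. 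In short, neither the reduction nor the target inequality is in hand; you need to bring the loser surpluses into the Period~1 bids and then compare welfare across ownership regimes, as the paper does.
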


\begin{proof}
We solve this game via backward induction below:

\noindent \textbf{Period 2}: There are two cases in this period depending on who owns the rights to propose from the period 1 auction.
    \begin{enumerate}
        \item \textit{A non-builder owns the right}: In this case, they put the right up for resale. The 2 builders bid, and the resulting revenue is the second highest out of $v_1, v_2$. Let us denote this as $p_0 = \mathbb{E}[v^{1:2}]$. Let us call the expected surplus of the builders in this case $s^1_0$ and $s^2_0$.
        \item \textit{Builder $i$ owns the right}:  Say e.g., Builder 1 owns the right. In this case, Builder $1$ will offer to resell the right to the other builder. Since there is only one other builder (no one else will bid since their value is normalized to $0$), this is optimally in the form of a take-it-or-leave-it-offer, which depends on their own realized value for the block. Formally, if builder $1$ owns the right, and realizes a value of $v_1$, then, by \cite{myerson1981optimal} the profit-maximizing price to offer the other builder is the value $v_2^*(v_1)$ which solves: 
        \begin{align}
            v_2^* - \frac{1-F_2(v_2^*)}{f_2(v_2^*)} = v_1.
        \end{align}
        The analogous formula in the case builder $2$  is the initial owner defines their optimal offer price, $v_1^*(v_2).$
        
        If the other builder declines this offer, then Builder $1$ simply proposes their own built block and realizes the value of that block, $v_1$.

        Let us define the net profit of builder $1$ when they are the owner of the right by $p_1$ and the surplus of the other builder $2$ by $s^2_1$. Similarly, when builder $2$ wins we denote this by $p_2$ and the surplus of the other builder $1$ by $s^1_2.$
    \end{enumerate}

\noindent \textbf{Period 1}: Parties bid in the Execution auction. Note that this is a second-price auction, and we consider the standard truthful equilibrium where everyone bids their values. Firstly, note that any non-builder will bid $p_0$, since they get $0$ if they lose the auction. Conversely, builder $1$ stands to gain $p_1$ if they win; while if they lose they will still make either $s^1_2$ or $s^1_0$. Therefore their maximum willingness to pay is $p_1 - \min (s^1_2, s^1_0)$. The maximum willingness to pay of builder $2$ can be computed similarly.

\medskip 
Lemma \ref{lem:1} shows that under our maintained assumptions, $p_1 - s^1_2 > p_2 - s^2_1 > p_0$. Therefore, builder $1$ has the highest willingness to pay in the first-period auction and wins under the standard truthful equilibrium in weakly dominant strategies. 
\end{proof}

Of course, it isn't just the winner of the first-period auction that concerns us. The following Corollary shows that an execution auction would result in the ex-ante strongest builder also then proposing blocks more often than they would have under MEV-Boost.  This is an additional concern--- if private order flow were to exclusively contract with a single builder, as is rumored to be in some cases (for example certain Telegram bots send their flow to a single builder they have contracted with), they are more likely to choose the dominant builder for quality of execution reasons etc. 

\begin{corollary}
    In equilibrium, Builder $1$ proposes the block with higher probability than if the right to propose had been sold in Period $2$ (e.g., via MEV-Boost). 
\end{corollary}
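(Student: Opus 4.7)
The plan is to compare the probability that Builder 1 ultimately proposes the block in the two regimes. Under the MEV-Boost counterfactual, the proposal right is sold in Period 2 by a non-builder validator via a standard second-price auction; this is exactly Case 1 in the Period 2 analysis of Theorem \ref{thm:aps}, and it gives Builder 1 the block iff $v_1 > v_2$, with probability $\Pr[v_1 > v_2]$. Under APS, Theorem \ref{thm:aps} says Builder 1 always wins the Period 1 execution auction, and Period 2 unfolds as in Case 2 of that proof: Builder 1 makes Builder 2 a take-it-or-leave-it offer at price $v_2^*(v_1)$ and proposes the block themselves iff Builder 2 declines, i.e., iff $v_2 < v_2^*(v_1)$. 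So the corollary reduces to the pointwise comparison $v_2^*(v_1) > v_1$.

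The key step is to observe that $v_2^*(v_1) > v_1$ strictly for every $v_1 \in [0,1)$. Rearranging the defining equation $v_2^* - (1-F_2(v_2^*))/f_2(v_2^*) = v_1$ gives $v_2^*(v_1) = v_1 + (1-F_2(v_2^*(v_1)))/f_2(v_2^*(v_1))$, and the second summand is strictly positive whenever $v_2^*(v_1)$ lies in the interior of $[0,1]$. This is just the standard Myerson/monopoly markup: a seller facing a single buyer optimally sets a reserve above their own value, trading expected volume for margin. The monotone-hazard-rate hypothesis ensures that $v_2^*(\cdot)$ is well-defined (as the inverse of the strictly increasing virtual value) and that the rearrangement is valid.

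Finally, for each $v_1 \in [0,1)$ the event $\{v_2 < v_1\}$ is strictly contained in $\{v_2 < v_2^*(v_1)\}$, and since $f_2$ is positive on the interior of the support there is strictly positive $F_2$-mass in the gap. Integrating over $v_1 \sim F_1$, with $F_1$ continuous so that $\Pr[v_1 = 1] = 0$, yields the strict inequality $\Pr[v_2 < v_2^*(v_1)] > \Pr[v_2 < v_1] = \Pr[v_1 > v_2]$, which is exactly the claim. The substantive content of the argument is almost entirely in the one-line fact that the Myerson reserve exceeds the seller's own valuation; there is no real obstacle, only the bookkeeping of matching the MEV-Boost counterfactual to the non-builder subcase already dispatched in the proof of Theorem \ref{thm:aps}.
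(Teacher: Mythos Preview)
Your proof is correct and follows essentially the same approach as the paper: both arguments invoke Theorem~\ref{thm:aps} to place Builder~1 as the Period~1 winner, observe that the Myerson reserve satisfies $v_2^*(v_1) > v_1$, and deduce the result from the containment of the event $\{v_2 \geq v_2^*(v_1)\}$ in $\{v_2 \geq v_1\}$ (equivalently, of $\{v_2 < v_1\}$ in $\{v_2 < v_2^*(v_1)\}$). Your write-up is more explicit about why the reserve strictly exceeds $v_1$ and about the strictness of the final inequality, but the underlying idea is identical.
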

\begin{proof}
    This follows straightforwardly from the theorem (Builder 1 wins the execution auction). Further, we showed that when builder $1$ owns the right to propose, and has a value of $v_1$, it will offer this right to builder $2$ for a take-it-or-leave it price of $v_2^*(v_1) > v_1$, so builder $2$ will only accept and propose the block when $v_2 \geq v_2^*(v_1)$. Note that since MEV-Boost is an English auction, we have under the usual equilibrium in truthful (weakly dominant) strategies, Builder $2$ wins the right auction whenever $v_2 \geq v_1$. The former event is clearly a subset of the latter, so the result follows. 
\end{proof}

This corollary straightforwardly implies that the expected surplus (profit) of Builder 2 is lower than under the MEV-Boost auction.%
\footnote{Formally this is a trivial consequence of revenue equivalence, see, e.g.,\cite{krishna2009auction}.}
This has additional long-term consequences for the builder market: for example, smaller/ fringe builders may find it even harder to sustain their presence than currently, resulting in further centralization. 

\section{Numerical Examples}
The theorem above may be more concrete via some straightforward numerical examples. The first is analytic. The second considers the case that there are more than 2 builders. 
\subsection*{Example 1}
To see this by example, consider the simple case where buyer $1$ has a value distributed $U[0,5/4]$ while buyer $2$ has a value distributed $U[0,1]$. Note that in this case, buyer $1$ wins a JIT auction with a probability of $3/5$ while buyer $2$ wins with a probability of  $2/5$.\footnote{These probabilities are proportional to the relative frequencies with which the top two builders win in the current MEV-Boost auction.} 

Now let's do some simple calculations. Firstly, note that a JIT auction in this case will achieve a revenue of $$\underbrace{\frac15}_{\substack{\text{Probability of builder 1}\\ \text{having value in $[1,5/4]$}}} \times \underbrace{\frac12}_{\text{Expected revenue}} + \underbrace{\frac45}_{\substack{\text{Complementary}\\\text{probability}}} \times \underbrace{\frac13}_{\substack{\text{Expected revenue}\\\text{of second price}}} = \frac{11}{30}.$$
Straightforward calculations show that builder $1$'s expected surplus in this auction is $\frac{4}{30} + \frac{1}{32}$ while builder $2's$ expected surplus in this auction is $\frac{4}{30}.$

Next, suppose the winner in the EA in period $1$ is builder $1$. In this case,  in period $2$ with probability $1/5$, builder $1$ has a value in $[1,5/4]$ and keeps the right to themselves. Conversely, with the remaining probability, builder $1$ has a value in $v_1 \in [0,1]$, and optimally offers the right to builder $2$ at a take-it-or-leave-it price of $\frac{1+v_1}{2}$ to maximize the expected profit (i.e., either getting a profit equal to the price if buyer $2$ accepts, and otherwise getting a profit equal to $v_1$). Some straightforward algebra implies that the total expected profit to builder $1$ in this case is
$$p_1= \frac{33}{40}.$$
Buyer $2$'s expected surplus is $s^2_1= \frac{1}{30}.$

In the reverse case where Builder $2$ wins the EA, and then realizes a value $v_2 \in [0,1]$, they optimally offer the right to builder $1$ at a take-it-or-leave-it price of $\frac{5/4+v_2}{2}$ to maximize the expected profit. Their total expected profit is therefore $p_2 = \frac{604}{15\times 64} \approx 0.629.$ Buyer $1's$ expected surplus in this case is $\frac{124}{1870} \approx \frac{1}{15}.$

By simple backward induction, therefore, in the first period auction, buyer $1$ is willing to pay $\frac{33}{40} - \frac{1}{15} \approx 0.758 $ which is larger than buyer $2's$ WTP which is $0.629 - \frac{1}{30} \approx 0.596$, both of which are larger than any non-builder-buyer's willingness to pay!

In short, in line with our Theorem, builder $1$ will win any execution auction and be willing to purchase any ET ahead of time. 

\subsection*{Example 2}
Of course, in practice there are more than $2$ builders. At the time of this writing, there are 3 major builders, and a long tail of smaller builders. The current market shares are roughly $\approx 50\%$ for the largest, $\approx 40\%$ for the second, and $\approx 7\%$ for the third, with the remainder split across several smaller builders.

To validate our model in this richer setting, let us instead consider a setting with 3 builders, with market shares of $50\%$, $40\%$ and $10 \%$ respectively.

We assume that each builder $i$ has a Lognormal distribution with parameters $(\mu_i, 1)$.%
\footnote{We choose the Lognormal distribution because among ``standard'' distributions, this distribution most closely matches observed bids in the MEV-Boost auction.}
Assume that $\mu_1 > \mu_2 > \mu_3$, in particular, $\mu_1 = 2.18,$, $\mu_2= 1.99$ and $\mu_3=1$. For these values, the probability that builder $i$ has the highest value among independent draws from these distributions is approximately $0.5, 0.4, 0.1$, i.e. roughly the outcome of MEV-Boost as described above. 

When buyers are ex-ante heterogeneous, the optimal auction is also discriminatory \citep{myerson1981optimal}. To simplify, we instead suppose that the winning builder runs a second-price auction among the other builders but adds a reserve price that is a marked-up version of its own realized value. Optimal revenues are achieved at a markup of $3$ (i.e., the winning builder in the execution auction offers to resell to the other two in a second price auction with a reservation price that is three times its own realized value). Our findings are summarized in the diagram below. 

\begin{tikzpicture}[
    >=stealth,
    arrow/.style={->, thick},
    box/.style={draw, rectangle, minimum width=2cm, minimum height=1cm},
]

% Lognormal distributions
\begin{axis}[
    axis lines = none,
    ticks = none,
    xmin=0, xmax=3,
    ymin=0, ymax=1.75,
    width=7cm,
    height=7cm,
    samples=200,
    domain=0.01:5,
    legend pos=north east,
    legend cell align={left},
    title = Valuations
]

% Filled distributions
\addplot[teal, fill=teal!30, thick] {1/(x*sqrt(2*pi)) * exp(-((ln(x)+0.8473)^2 / 2))} \closedcycle;
%\addlegendentry{$\mu_3 \approx 0.4$}

\addplot[orange, fill=orange!30, thick] {1/(x*sqrt(2*pi)) * exp(-((ln(x)+0.3266)^2 / 2))} \closedcycle;
%\addlegendentry{$\mu_2 \approx 0.7$}

\addplot[purple, fill=purple!30, thick] {1/(x*sqrt(2*pi)) * exp(-(ln(x))^2 / 2)} \closedcycle;
%\addlegendentry{$\mu_1 = 1$}

% Dotted outlines
\addplot[purple, dotted, thick] {1/(x*sqrt(2*pi)) * exp(-(ln(x))^2 / 2)};
\addplot[orange, dotted, thick] {1/(x*sqrt(2*pi)) * exp(-((ln(x)+0.3266)^2 / 2))};
\addplot[teal, dotted, thick] {1/(x*sqrt(2*pi)) * exp(-((ln(x)+0.8473)^2 / 2))};

\end{axis}

% Add curved arrow
%\draw[arrow] (axis cs:4,0.5) to[out=0,in=180] node[midway,above] {JIT} (9,3);

% Win probabilities
% \node[box, fill=purple!30] (prob1) at (10,4) {$P(X_1 \text{ wins}) = \frac{1}{2}$};
% \node[box, fill=orange!30] (prob2) at (10,2.5) {$P(X_2 \text{ wins}) = \frac{1}{3}$};
% \node[box, fill=teal!30] (prob3) at (10,1) {$P(X_3 \text{ wins}) = \frac{1}{6}$};

% Win Probabilities
\node (win_prob) at (11,4.5) {
    \begin{tikzpicture}
    \fill[teal!30] (0,0) rectangle (2,.5);
    \fill[orange!30] (0,.5) rectangle (2,1.5);
    \fill[purple!30] (0,1.5) rectangle (2,3);
    \end{tikzpicture}
};\node (win_prob2) at (11,0) {
        \begin{tikzpicture}
    \fill[teal!30] (0,0) rectangle (2,.25);
    \fill[orange!30] (0,.25) rectangle (2,1.1);
    \fill[purple!30] (0,1.1) rectangle (2,3);
    \end{tikzpicture}
};

\node (win_prob3) at (8,0) {
    \begin{tikzpicture}
    \fill[teal!30] (0,0) rectangle (2,.5);
    \fill[orange!30] (0,0) rectangle (2,0);
    \fill[purple!30] (0,0) rectangle (2,3);
    \end{tikzpicture}
};

\node[above] at (win_prob.north) {P(win JIT)};
\node[above] at (win_prob2.north) {P(win JIT)};
\node[above] at (win_prob3.north) {P(win EA)};

% Surplus
\node (surplus) at (14,4.5) {
        \begin{tikzpicture}
    \fill[teal!30] (0,0) rectangle (2,0.161);
    \fill[orange!30] (0,0.161) rectangle (2,1.15);
    \fill[purple!30] (0,1.15) rectangle (2,3);
    \end{tikzpicture}
};

\node (surplus2) at (14,0) {
    \begin{tikzpicture}
    \fill[teal!30] (0,0) rectangle (2,0.1);
    \fill[orange!30] (0,0.1) rectangle (2,.5);
    \fill[purple!30] (0,.5) rectangle (2,3);
    \end{tikzpicture}
};
\node[above] at (surplus.north) {Surplus};
\node[above] at (surplus2.north) {Surplus};
%\node[right] at (surplus.east) {58.6\%};
%\node[right] at ($(surplus.east)+(0,-0.6)$) {30.8\%};
%\node[right] at ($(surplus.east)+(0,-1.2)$) {10.6\%};

% Connect probabilities to surplus
\draw[arrow] (win_prob) -- (surplus);
\draw[arrow] (win_prob2) -- (surplus2);
\draw[arrow] (win_prob3) -- (win_prob2);

\draw [rectangle,draw = lightgray!40] (6.5,2.75) rectangle (15.5,7.25);
\node at (11,7.6) {MEV-Boost};

\draw [rectangle,draw = lightgray!40] (6.5,-2) rectangle (15.5,2.5);
\node at (11,-2.5) {EA};
%\node[draw, fit=(win_prob) (surplus), label=above:Just in time]
%\node[draw, fit=(win_prob2) (surplus2), label=below:day ahead]
\end{tikzpicture}

In text: Numerical simulations then show that when Builder 1 wins the execution auction, it achieves a total expected value of 16.497 in period $2$, while builder $2$ and builder $3$'s surpluses are 2.645 and 0.273 respectively. Similarly, if Builder 2 wins the execution auction its expected value in period $2$ is 14.63, while builders $1$ and $3$ have expected surpluses of 4.338 and 0.307 respectively. Finally if Builder 3 wins the execution auction, its expected value is 9.576 while builders $1$ and $2's$ surpluses are 6.071 and 4.157 respectively. 

By backward induction, therefore, in the execution auction in period $1$, builder $1$'s willingness to pay for the right to is $16.497 - 4.338 = 12.159$, builder $2$'s willingness to pay is $14.63-2.645 = 11.985$, while builder $3$'s is $<10$. Therefore, builder $1$ wins the execution auction. 

Finally, our simulations also show that in the resulting equilibrium, the block is actually proposed by Builder $1$ with probability $0.795$, Builder $2$ with probability $0.171$, and builder $3$ with probability $0.034$. Compared to the MEV-Boost shares of $(0.5,0.4,0.1)$ the execution auction essentially decimates builder $3$'s share and halves builder $2$'s share, while entrenching builder $1$.

\section{Implications and Conclusions} \label{sec:implications}

In conclusion, our model strongly suggests Execution Auctions will result in even further centralization of the builder market. The execution tickets proposals do not formally specify how the underlying tickets will be priced/ sold. Nevertheless, our results imply that the ex-ante strongest builder will place a higher willingness to pay on a ticket than the other parties and will, therefore, be at a substantial advantage. Similar results trivially apply if the resale market is imperfect, since that only reinforces the advantages of the ex-ante strong buyer. Similar results trivially apply if the resale market is imperfect. For example if there is no resale market, it is obvious that the ex-ante strongest buyer wins the auction. 

The biggest concern, in our minds is what such proposals might do to the health of competition in the overall builder market. We are already in a setting where two builders produce most blocks ($\sim 90\%$ as of this writing), and three builders produce almost all. In the long run, such proposals will, in our opinion, further restrict builder entry and/or lead to the exit of smaller builders currently in the market. 

To our minds, as a result, these proposals are inherently flawed and unworkable. Selling a ``right'' ex-ante when different parties are highly heterogeneous inherently advantages the party that is ex-ante stronger, and removes the role of randomness (e.g. randomness in realized flows to each builder etc.) in creating a more competitive builder market. 

This leaves us with the question of what should Ethereum do? The current out-of-protocol solution is meant to be temporary, and there are several technical implementation issues in conducting such a JIT auction in protocol (``ePBS''). In our opinion, this dichotomy is a false one, and instead of focusing attention on bringing current MEV payments into protocol, Ethereum research should focus its attention on other proposals that decrease the central role of the builder and/or the reliance on a monolithic proposer. These include our preferred solution, Multiple Concurrent Proposers \citep{fox2023censorship}, and others such as Inclusion Lists.\footnote{See, e.g., \url{https://ethresear.ch/t/fork-choice-enforced-inclusion-lists-focil-a-simple-committee-based-inclusion-list-proposal/19870}.} Even more straightforward design choices like speeding up the block production rate (currently a block is produced every 12 seconds) should 
greatly reduce the amount MEV and may decrease the ex-ante differences between buidlers. 

\bibliographystyle{econometrica}
\bibliography{refs}

\appendix

\begin{lemma}\label{lem:1}
    Under the maintained assumptions of Section \ref{sec:model}, we have that $p_1 - s^1_2 > p_2 - s^2_1 > p_0$. 
\end{lemma}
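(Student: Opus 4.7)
The plan is to handle the two inequalities in the lemma separately, exploiting the identity $T_i := p_i + s^j_i$ for the total ex-post surplus when builder $i$ owns the right (the sum of the owner's optimal-mechanism revenue and the other builder's information rent).

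For the second inequality $p_2 - s^2_1 > p_0$, I would use three observations. First, builder $2$ as owner can always keep the object, guaranteeing expected payoff $\mathbb{E}[v_2]$; thus $p_2 \geq \mathbb{E}[v_2]$. Second, the decomposition $v_2 = \min(v_1, v_2) + (v_2 - v_1)^+$ and taking expectations gives the identity $\mathbb{E}[v_2] = p_0 + s^2_0$, where $s^2_0$ is builder $2$'s surplus against a non-strategic second-price seller. Third, since the Myerson reserve satisfies $v_2^*(v_1) > v_1$ strictly on a positive-measure set, facing a strategic seller strictly shrinks builder $2$'s surplus: $s^2_1 < s^2_0$. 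Chaining these, $p_2 - s^2_1 > \mathbb{E}[v_2] - s^2_0 = p_0$.

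For the first inequality $p_1 - s^1_2 > p_2 - s^2_1$, algebraic rearrangement gives $(p_1 - s^1_2) - (p_2 - s^2_1) = T_1 - T_2$, so it suffices to prove $T_1 > T_2$. Writing $T_i = \mathbb{E}[\max(v_1, v_2)] - I_i$, where $I_i$ is the allocation inefficiency (positive on the set where owner $i$ keeps an object the buyer values more but below the reserve), the problem reduces to $I_2 > I_1$, with $I_1 = \mathbb{E}[(v_2 - v_1)\mathbf{1}\{v_1 \leq v_2 < v_2^*(v_1)\}]$ and $I_2$ defined symmetrically. Hazard rate dominance enters twice here: it gives $(1-F_1)/f_1 \geq (1-F_2)/f_2$ pointwise, equivalent to $v_1^*(t) \geq v_2^*(t)$ for all $t$, so the Myerson markup charged when selling to the stronger builder $1$ strictly exceeds the markup when selling to builder $2$, making the inefficient-buyer-value interval strictly wider in problem $2$. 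Separately, HRD implies FOSD, so builder $1$'s marginal density concentrates more mass on high values --- precisely where problem $2$'s inefficiency accumulates.

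The main technical obstacle is rigorously combining the wider-region effect with the density-shift effect to conclude $I_2 > I_1$. Under the stronger monotone likelihood ratio condition, the inequality would follow immediately from a pointwise comparison of integrands after relabeling; under only HRD, a pointwise comparison fails and a subtler argument is needed. I would expect the proof to proceed via an integration by parts that exploits the monotone hazard rate assumption (to control the derivatives of $I_i(\cdot)$ with respect to the owner's value), yielding an expression for $I_2 - I_1$ whose sign is determined directly by $\psi_1 \leq \psi_2$.
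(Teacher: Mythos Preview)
Your argument for the second inequality $p_2 - s^2_1 > p_0$ is correct and in fact cleaner than the paper's, which simply asserts that ``an analogous argument'' to the first inequality applies. Your chain $p_2 \geq \mathbb{E}[v_2] = p_0 + s^2_0 > p_0 + s^2_1$ is a complete three-line proof the paper never spells out.

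For the first inequality, your reduction coincides exactly with the paper's. The paper's functions $g_1,g_2$ are the realized values under the two ownership regimes, so $\mathbb{E}[g_i]=T_i$ in your notation, and the paper's target $\int (g_1-g_2)\,f_1 f_2>0$ is precisely your $T_1>T_2$, i.e.\ $I_2>I_1$. Where you diverge is at the closing step, which you explicitly leave open and conjecture will go through by an integration-by-parts argument exploiting monotone hazard rates. That is not what the paper does. The paper instead introduces an auxiliary $g_3$ obtained from $g_1-g_2$ by replacing the constraint $v_1\le v_1^*(v_2)$ with the ``symmetrized'' constraint $v_1\le v_2^*(v_2)$; hazard-rate dominance gives $v_1^*(\cdot)\ge v_2^*(\cdot)$, so the remainder $(g_1-g_2)-g_3$ is pointwise nonnegative, and the paper asserts $\int g_3\,f_1 f_2=0$ by antisymmetry of $g_3$ about the diagonal $v_1=v_2$.

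So the concrete gap in your proposal is that you have not found this symmetrization device (or any substitute) and your suggested integration-by-parts route is speculative. That said, your instinct that this is the delicate step is sound: note that while $g_3$ is indeed antisymmetric, the product density $f_1(v_1)f_2(v_2)$ is \emph{not} symmetric under $(v_1,v_2)\mapsto(v_2,v_1)$, so the paper's claim that $\int g_3\,f_1 f_2=0$ ``by symmetry'' does not follow from antisymmetry of $g_3$ alone. A fully rigorous completion of this step requires more care than either your sketch or the paper's argument as written provides.
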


\begin{proof}
    Note that: 
\begin{align*}
    &p_1 = \mathbb{E}_{v_1}\left[v_2^*(v_1) (1-F_2(v_2^*(v_1))) + v_1 F_2(v_2^*(v_1))\right],\\
    &p_2 = \mathbb{E}_{v_2}\left[v_1^*(v_2) (1-F_1(v_1^*(v_2))) + v_2 F_1(v_1^*(v_2))\right],\\
    &s^1_2 = \mathbb{E}_{v_1, v_2}[(v_1 - v_1^*(v_2))^+],\\
    &s^2_1 = \mathbb{E}_{v_1, v_2}[(v_2 - v_2^*(v_1))^+].
\end{align*}
Here $(x)^+ \equiv \max \{x, 0\}$.

Therefore:
\begin{align*}
   & p_1 - s^1_2 > p_2 - s^2_1\\
   \iff & \mathbb{E}_{v_1}\left[v_2^*(v_1) (1-F_2(v_2^*(v_1))) + v_1 F_2(v_2^*(v_1))\right] - \mathbb{E}_{v_1, v_2}[(v_1 - v_1^*(v_2))^+] \\
   &\qquad > \mathbb{E}_{v_2}\left[v_1^*(v_2) (1-F_1(v_1^*(v_2))) + v_2 F_1(v_1^*(v_2))\right] - \mathbb{E}_{v_1, v_2}[(v_2 - v_2^*(v_1))^+],\\
   \iff & \mathbb{E}_{v_1}\left[ v_1 F_2(v_2^*(v_1))\right] - \mathbb{E}_{v_1, v_2}[v_1 \mathbbm{1}_{v_1 > v_1^*(v_2)} ] > \mathbb{E}_{v_2}\left[ v_2 F_1(v_1^*(v_2))\right] - \mathbb{E}_{v_1, v_2}[(v_2 \mathbbm{1}_{v_2 > v_2^*(v_1)}],\\
   \iff & \mathbb{E}_{v_1}\left[ v_1 F_2(v_2^*(v_1))\right] + \mathbb{E}_{v_1, v_2}[(v_2 \mathbbm{1}_{v_2 > v_2^*(v_1)}]  > \mathbb{E}_{v_2}\left[ v_2 F_1(v_1^*(v_2))\right] + \mathbb{E}_{v_1, v_2}[v_1 \mathbbm{1}_{v_1 > v_1^*(v_2)} ],\\
    \iff & \int_0^1 \int_0^{v_2^*(v_1)} v_1 f_2(v_2) dv_2 f_1 (v_1) dv_1 + \int_0^1 \int_{v_2^*(v_1)}^1 v_2 f_2(v_2) dv_2 f_1 (v_1) dv_1 \\
    & \qquad > \int_0^1 \int_0^{v_1^*(v_2)} v_2 f_1(v_1) dv_1 f_2 (v_2) dv_2 + \int_0^1 \int_{v_1^*(v_2)}^1 v_1 f_1(v_1) dv_1 f_2 (v_2) dv_2. 
\end{align*}
Now define two functions, $g_1, g_2: [0,1]^2 \to [0,1].$  as:
\begin{align*}
    &g_1(v_1,v_2) = \begin{cases}
        v_2 & \textrm{ if } v_2 > v_2^*(v_1),\\
        v_1 & \textrm{o.w.}
    \end{cases}\\
\intertext{and, similarly,}
& g_2(v_1, v_2) = \begin{cases}
    v_1 & \textrm{ if } v_1 > v_1^*(v_2),\\
    v_2 & \textrm{o.w.}
\end{cases}
\end{align*}
We can therefore rewrite the previous inequality as:
\begin{align*}
    &\int_0^1 \int_0^1 g_1(v_1,v_2) f_1(v_1) dv_1 f_2 (v_2) dv_2 > \int_0^1 \int_0^1 g_2(v_1,v_2) f_1(v_1) dv_1 f_2 (v_2) dv_2\\
    \iff & \int_0^1 \int_0^1 \left( g_1(v_1,v_2) - g_2 (v_1,v_2)\right) f_1(v_1) dv_1 f_2 (v_2) dv_2 >0
\end{align*}
    Now, note that:
    \begin{align*}
        g_1(v_1, v_2) - g_2(v_1, v_2) = \begin{cases}
            0 & \textrm{ if } v_2 > v_2^*(v_1) \vee v_1 > v_1^*(v_2),\\
            v_1-v_2 & \textrm{ if } v_2 \leq v_2^*(v_1) \wedge v_1 \leq v_1^*(v_2).
        \end{cases}
    \end{align*}
    The result now follows since our assumptions of monotone hazard rate, we have that $v^*_1(\cdot), v^*_2(\cdot)$ are both strictly increasing. Further from the assumption that that $F_1$ dominates $F_2$ in the hazard rate order, we have that for any $x \in [0,1]$, $v^*_1(x) \geq v^*_2(x)$. Finally, from the definition of hazard rate we have that $v^*_1(1)  = v^*_2(1) = 1$. To see , consider the function $g_3$ defined as:
    \begin{align*}
       g_3 (v_1,v_2) =  \begin{cases}
            0 & \textrm{ if } v_2 > v_2^*(v_1) \vee v_1 > v_1^*(v_2),\\
            v_1-v_2 & \textrm{ if } v_2 \leq v_2^*(v_1) \wedge v_1 \leq v_2^*(v_2).
        \end{cases}
    \end{align*}
    Note that by symmetry of the function $g_3$ around the diagonal $v_1 = v_2$, we have that $\int_0^1 \int_0^1 g_3(v_1, v_2) f_1(v_1) dv_1 f_2 (v_2) dv_2 =0$, Finally note that that 
    \begin{align*}
       g_1 - g_2 - g_3 (v_1,v_2) =  \begin{cases}
            v_1-v_2 & \textrm{ if }  v_1 > v_2^*(v_1) \wedge v_1 < v_2^*(v_1),\\
            0 & \textrm{ o.w. }
        \end{cases}
    \end{align*}
    Finally note that this function is non-negative on its entire domain. Therefore we have $ \int_0^1 \int_0^1 \left( g_1(v_1,v_2) - g_2 (v_1,v_2)\right) f_1(v_1) dv_1 f_2 (v_2) dv_2 >0$ as desired. 

    An analogous argument shows that $p_1 - s^1_2 >p_0$ and concludes the proof. 

\end{proof}
\end{document}